\newtheorem{theorem}{Theorem}[section]
\newenvironment{proof}[1][Proof]{\begin{trivlist}
\item[\hskip \labelsep {\bfseries #1}]}{\end{trivlist}}
\newcommand{\qed}{\nobreak \ifvmode \relax \else
      \ifdim\lastskip<1.5em \hskip-\lastskip
      \hskip1.5em plus0em minus0.5em \fi \nobreak
      \vrule height0.75em width0.5em depth0.25em\fi}
\title{Efficient Dispersion of Mobile Robots on Graphs}
\author{Ajay D. Kshemkalyani\footnote{Department of Computer Science, University of Illinois at Chicago. Email: ajay@uic.edu}
\and Faizan Ali \footnote{Department of Computer Science, University of Illinois at Chicago. Email: fali28@uic.edu}}
\date{}
\begin{document}

\maketitle

\begin{abstract}
The dispersion problem on graphs requires $k$ robots placed arbitrarily at the $n$ nodes of an anonymous  graph, where $k \leq n$, to coordinate with each other to reach a final configuration in which each robot is at a distinct node of the graph. The dispersion problem is important due to its relationship to graph exploration by mobile robots, scattering on a graph, and load balancing on a graph. In addition, an intrinsic application of dispersion has been shown to be the relocation of self-driven electric cars (robots) to recharge stations (nodes). 
We propose three efficient algorithms to solve dispersion on graphs.
Our algorithms require $O(k \log\Delta)$ bits at each robot, and $O(m)$ steps running time, where  $m$ is the number of edges and $\Delta$ is the degree of the graph. 
The algorithms differ in whether they address the synchronous or the asynchronous system model, and in what, where, and how data structures are maintained. 
\end{abstract}
{\bf Keywords:} Distributed algorithm, Graph algorithm, Mobile robot, Dispersion, Collective robot exploration

\section{Introduction}
\label{intro}

\subsection{Background and Motivation}
The problem of dispersion of mobile robots, which requires the robots to spread out evenly in a region, has been explored in the literature \cite{DBLP:conf/wafr/HsiangABFM02}.
The dispersion problem on graphs, formulated by Augustine and Moses Jr.  \cite{DBLP:conf/icdcn/AugustineM18}, requires $k$ robots placed arbitrarily at the $n$ nodes of an anonymous  graph, where $k \leq n$, to coordinate with each other to reach a final configuration in which each robot is at a distinct node of the graph. This problem has various applications; for example, an intrinsic application of dispersion has been shown to be the relocation of self-driven electric cars (robots) to recharge stations (nodes) \cite{DBLP:conf/icdcn/AugustineM18}. Recharging is a time-consuming process and it is better to search for a vacant recharge station than to wait. In general, the problem is applicable whenever we want to minimize the total cost of $k$ agents sharing $n$ resources, located at various places, subject to the constraint that the cost of moving an agent to a different resource is much smaller than the cost of multiple agents sharing a resource.

The dispersion problem is also important due to its relationship to graph exploration by mobile robots, scattering on a graph, and load balancing on a graph. These are fundamental problems that have been well-studied over the years by varying the system model and assumptions. Although some works consider these problems in general graphs, many other works consider specific graphs like grids, trees, and rings. 

\subsection{Our Results}
\label{results}

Our results assume that robots have no visibility and can only communicate with other robots present at the same node as themselves. The robots are deterministic, and are distinguishable. The undirected graph, with $m$ edges, $n$ nodes, diameter $D$, and degree $\Delta$, is anonymous, i.e., nodes have no labels. Nodes also do not have any memory but the ports (leading to incident edges) at a node have locally unique labels. 

\begin{table*}[t]
\caption{Comparison of the proposed algorithms for dispersion on graphs.}
\label{tab:complexity}
\begin{center}
\begin{tabular}{ccccc}\hline
  Algorithm & Model & Memory Requirement & Time & Features \\
  & & at Each Robot & Complexity & \\
  \hline \hline
  Helping-Sync & Sync. & $O(k \log \,\Delta)$ bits & $O(m)$ steps & need to know $m$ \\
  & & & & for termination \\
  \hline
  Helping-Async & Async. & $O(k \log \, \Delta)$ bits & $O(m)$ steps & no termination \\
  \hline
  Independent-Async & Async. & $O(k \log \, \Delta)$ bits & $O(m)$ steps & no termination \\
  \hline
  \hline
\end{tabular}
\end{center}
\end{table*}

We provide three efficient algorithms to solve dispersion in both the synchronous and asynchronous system models. Our algorithms require $O(k \log\Delta)$ bits at each robot, and $O(m)$ steps running time. We assume that the robots do not know any of the graph parameters $n$, $m$, $D$, or $\Delta$ in the algorithms. It is sufficient if $O(k\log\Delta)$ bits are provisioned at each robot. The following is an overview of our algorithms; the upper bound results are given in Table~\ref{tab:complexity}.
\begin{enumerate}
\item
For the synchronous model, we present algorithm {\em Helping-Sync} which needs $O(k\log\Delta)$ bits per robot and $O(m)$ steps time complexity; for this synchronous algorithm, we assume robots know $m$ if termination is to be achieved. In this algorithm, docked robots, defined as robots that have reached their nodes in the final configuration, help visiting robots by maintaining data structures on their behalf.
\item
Algorithm {\em Helping-Async} is the asynchronous version of {\em Helping-Sync} and has the same time complexity $O(m)$ and same space complexity of $O(k\log\Delta)$ bits per robot; however this algorithm requires each docked robot 
to remain active and help other visiting robots.
\item
Algorithm {\em Independent-Async} has the same complexity ($O(m)$ time steps and $O(k\log\Delta)$ bits per robot) and features as Algorithm {\em Helping-Async}; it differs in what, how and where data structures are maintained. Here, each robot maintains its own data structures, as opposed to {\em Helping-Async} where docked robots help visiting robots by maintaining data structures on their behalf.
\end{enumerate}
Although the asynchronous algorithms, technically speaking, do not terminate because the docked robots need to be awake to relay local information to visiting robots, we state their time complexity (to be $O(m)$)  because at most $O(m)$ steps are required for each robot to perform active computations and movements until it docks at a node; after that, a docked robot merely passively helps visiting robots (until they find a node to dock).

\subsection{Related Work}
The dispersion problem on graphs was formulated by Augustine and Moses Jr. \cite{DBLP:conf/icdcn/AugustineM18}. 
They showed a lower bound of $\Omega(D)$ on the time complexity, and an independent lower bound of $\Omega(\log\,n)$ bits per robot, to solve dispersion. They then gave several dispersion algorithms for specific types of graphs, assuming the synchronous computation model. 
Besides giving dispersion algorithms for paths, rings, trees, rooted trees (a rooted tree has all the robots at the same node in the initial configuration), and rooted graphs (a rooted graph has all the robots at the same node in the initial configuration), they gave two algorithms for general graphs in which the robots can be at arbitrary nodes in the initial configuration. The first algorithm uses $O(\log \,n)$ bits at each robot and $O(\Delta^D)$ rounds, whereas the second algorithm uses $O(n \log \,n)$ bits at each robot and $O(m)$ rounds. We claim that, unfortunately, both these algorithms are incorrect. Both algorithms use variants of Depth First Search (DFS), but may fail to search the graph completely, backtrack incorrectly, and can get caught in cycles while backtracking. This also renders their complexity results incorrect. The problems arise because concurrent searches of the graph by different robots interfere with one another in these algorithms. 
Further, the second algorithm executes for an insufficient number of rounds.
Our work considers dispersion in (unrooted) graphs wherein the robots can be at arbitrary nodes in the initial configuration, for both the synchronous and the asynchronous computation models. We consider general graphs rather than restricted graphs like grids, trees, and rings.

The dispersion problem on graphs is closest to the problem of graph exploration by robots. In the graph exploration problem, the objective is to visit all the nodes of the graph. There are many results for this problem. Several works assume specific topologies such as trees \cite{DBLP:journals/talg/AmbuhlGPRZ11,DBLP:conf/sirocco/DisserMNSS17,DBLP:journals/networks/FraigniaudGKP06,DBLP:journals/jco/HigashikawaKLT14}. For general graphs, the results depend on the different system models and assumptions such as the following.
\begin{enumerate}
    \item  what parameters of the graph are known to the robots, 
    \item  whether the graph is anonymous, 
    \item  whether memory is allowed at robots \cite{DBLP:journals/tcs/FraigniaudIPPP05}, 
    \item  whether memory is allowed at the nodes \cite{DBLP:journals/talg/CohenFIKP08}, 
    \item  whether knowledge of the incoming ports through which a robot enters nodes is allowed \cite{DBLP:journals/tcs/FraigniaudIPPP05}, 
    \item  whether exploration is by a single robot or cooperating robots \cite{DBLP:journals/trob/BrassCGX11,DBLP:conf/icarcv/BrassVX14,DBLP:journals/iandc/DereniowskiDKPU15},
    \item  if exploration is by multiple robots, whether robots are allowed to communicate under the local communication model or the global communication model \cite{DBLP:journals/trob/BrassCGX11,DBLP:conf/icarcv/BrassVX14,DBLP:journals/iandc/DereniowskiDKPU15},
    \item  if exploration is by multiple robots, whether robots are colocated or dispersed in the initial configuration,
    \item  whether we are designing a solution that is time optimal, or space optimal,
    \item  whether the bounds on memory are subject to time optimality solutions,
    \item  whether termination of the robot is required (and if so, whether at the starting node) or it is to perpetually traverse the graph
\end{enumerate}
We now review a few of the closest results. 
Fraigniaud et al. \cite{DBLP:journals/tcs/FraigniaudIPPP05} showed that using only memory at a robot, the robot can explore an anonymous graph using $\theta(D\log\Delta)$ bits based on a $(D+1)$-depth restricted DFS. They did not analyze the time complexity, which turns out to be $\sum_{i=1}^{D}O(\Delta^i) = O(\Delta^{D+1})$ which is very high. Their algorithm has no mechanism to avoid getting caught in cycles and the only way out of cycles is the depth-restriction on the DFS. The robot also requires knowledge of $D$ to terminate.
Cohen et al. \cite{DBLP:journals/talg/CohenFIKP08} gave two DFS-based algorithms with $O(1)$ memory at the nodes. The first algorithm uses $O(1)$ memory at the robot and 2 bits memory at each node to traverse the graph. The 2 bits memory at each node is initialized by short labels in a pre-processing phase which takes time $O(mD)$. Thereafter, each traversal of the graph takes up to $20m$ time steps. The second algorithm uses $O(\log\Delta)$ bits at the robot and 1 bit at each node to traverse the graph. The 1 bit memory at each node is initialized by short labels in a pre-processing phase which takes time $O(mD)$. Thereafter, each traversal of the graph takes up to $O(\Delta^{10}m)$ time steps.
Dereniowski et al. \cite{DBLP:journals/iandc/DereniowskiDKPU15} studied the trade-off between graph exploration time and number of robots, assuming that (i) nodes have unique identifiers, (ii) when visiting a node, a list of all its neighbors is also known, (iii) all the robots are located at one node in the initial configuration, (iv) robots have unique identifiers, and (v) there is no bound on the memory of robots, which construct a map of the previously visited subgraph. The authors considered results in both the local communication model, as well as the global communication model. The main contribution is an exploration strategy for a polynomial number of robots $Dn^{1+\epsilon} < n^{2+\epsilon}$ to explore graphs in an asymptotically optimal number of steps $O(D)$. 
Using the Rotor-Router algorithm allowing only $\log \Delta$ bits per node, an oblivious robot (i.e., robot is not allowed any memory) that also has no knowledge of the entry port when it enters a node, can explore an anonymous port-labeled graph in $2mD$ time steps \cite{DBLP:journals/algorithmica/YanovskiWB03,DBLP:conf/wdag/BampasGHIKK09}.
Menc et al. \cite{DBLP:journals/ipl/MencPU17} proved a lower bound of $\Omega(mD)$ on the exploration time steps for the Rotor-Router algorithm.

The dispersion problem is similar to the problem of scattering or uniform deployment of $k$ robots on a $n$ node graph. The scattering problem was examined on rings \cite{DBLP:journals/tcs/ElorB11,DBLP:conf/podc/ShibataMOKM16}, and on grids \cite{DBLP:journals/ijfcs/BarriereFBS11}, under different system assumptions than those that we make for the dispersion problem.

The dispersion problem is also similar to the load balancing problem, wherein a given load has to be (re-)distributed among several processors. In this analogy, the robots are the load, and it is these active loads rather than the passive nodes that make decisions about movements in the graph. Load balancing in graphs has been studied extensively. Load balancing algorithms use either a diffusion-based approach \cite{DBLP:journals/jpdc/Cybenko89,DBLP:journals/mst/MuthukrishnanGS98,DBLP:conf/spaa/SubramanianS94}, which is somewhat similar to our algorithms, or a dimension-exchange approach \cite{DBLP:journals/jpdc/XuL92} wherein a node can balance with either a single neighbor in a round, or concurrently with all its neighbors in a round.

\section{System Model}
\label{model}
We are given an undirected graph $G$ with $n$ nodes, $m$ edges, and diameter $D$. The maximum degree of any node is $\Delta$. The graph is anonymous, i.e., nodes do not have unique identifiers. At any node, its incident edges are uniquely identified by a label in the range $[0,\delta - 1]$, where $\delta$ is the degree of that node. We refer to this label of an edge at a node as the port number at that node. We assume no correlation between the two port numbers of an edge. There is no memory at the nodes.

In our algorithms, we consider both the synchronous model and the asynchronous model. In the synchronous model, there is a global clock that coordinates the processing of the robots in rounds. In any round, a robot stationed at a node does some computation, perhaps after communication with local robots, and then optionally does a move along one of the incident edges to an adjacent node. Multiple robots can move along an edge in a round. However, we assume that each edge is a single-lane edge, in the sense that robots can move along the edge sequentially. As a result, if multiple robots make a move along an edge, they will enter the node in sequential order which can be captured by a real-time synchronized clock.
In the asynchronous model, there is no global mechanism that coordinates the round numbers of the robots. Thus, each robot executes its rounds/iterations at an independent pace. When a robot determines that it will occupy a particular node in the final configuration, it {\em docks} at that node (by entering $state$ = $settled$). 

The $k$ robots are distinguished from each other by a unique $\lceil\log\,k\rceil$-bit label from the range $[1,k]$. The robots are also endowed with a real-time synchronized clock. A robot can only communicate with other robots that are present at the same node as itself. No robot initially has knowledge of the graph or its parameters $n$, $m$, $D$, and $\Delta$. We assume each robot knows $k$, which is upper-bounded by $n$. In our synchronous algorithms ({\em Helping-Sync}, and the synchronous version of algorithm {\em Independent-Async} presented for the asynchronous model), we assume a robot has knowledge of the parameter $m$ if we want to achieve local termination of the code after a robot has docked at a node in the final configuration. 
For the asynchronous algorithms, the main for-loop counting up to $4m-2(n-1)$ could be replaced by a {\tt while-true} loop. This is because even after a robot docks at a node, it needs to communicate its label (and some additional information in Algorithm {\em Helping-Async}) to visiting robots, to enable them to navigate the graph.

When multiple robots at a node contend to dock at that node, they invoke a MUTEX(node) call that guarantees that only one robot succeeds in docking. The MUTEX may be implemented in various ways. For example, the earliest robot (among the contending robots) that arrived at the node can win the MUTEX; if there is a tie in case of multiple robots arriving simultaneously along different ports, then the tie is broken by choosing the robot arriving along the lowest numbered port as the winner. Or, in the synchronous model, the robots can compare their labels and the robot with the smallest label wins the MUTEX. Or the MUTEX can be implemented by a hardware device to which the winner robot physically connects when it docks. 

\noindent{{\bf Problem Description: }} We are given an initial configuration of $k$ robots, where $k \leq n$, distributed arbitrarily at nodes in the graph. The robots need to move around to reach a final configuration in which there is at most one robot at any node in the graph.

\subsection{Bounds and their Analysis}
\label{analysis}
A lower bound of $\Omega(D)$ on the running time was shown in \cite{DBLP:conf/icdcn/AugustineM18}.  (Note that this prior work \cite{DBLP:conf/icdcn/AugustineM18} required $k=n$ whereas we allow $k \leq n$.)
We present a different lower bound.

\begin{theorem}
The dispersion problem on graphs requires $\Omega(k)$ steps as its running time.
\label{th:timebound}
\end{theorem}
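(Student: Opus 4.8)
The plan is to prove the bound by exhibiting a single adversarial instance---a graph together with an initial placement---on which every dispersion algorithm is forced to run for $\Omega(k)$ steps, and to derive the bound from a purely metric argument that is independent of the algorithm. The instance I would use is the line (path) graph on $n$ nodes $v_0, v_1, \ldots, v_{n-1}$ with $n \ge k$, in which every one of the $k$ robots starts at the endpoint $v_0$. The point of this instance is that it decouples the bound from the diameter: the makespan here is governed by how far the robots must spread, not by $D = n-1$, so the argument will yield $\Omega(k)$ even when $D$ is much larger than $k$.

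First I would observe that, by definition of dispersion, the final configuration places the $k$ robots at $k$ pairwise distinct nodes. Next I would count the nodes available close to the source: on the path, the only nodes at distance at most $k-2$ from $v_0$ are $v_0, v_1, \ldots, v_{k-2}$, which is a set of just $k-1$ nodes. Hence no placement of $k$ robots at distinct nodes can keep all of them within distance $k-2$ of $v_0$; at least one robot must occupy a node at distance at least $k-1$ from its starting position. Finally I would invoke the movement rule of the model: in one step a robot traverses at most one edge, so after $t$ steps a robot is at distance at most $t$ from $v_0$. The robot that ends at distance $\ge k-1$ therefore performs at least $k-1$ move steps, so the running time is at least $k-1 = \Omega(k)$.

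The step I expect to require the most care is arguing that no clever coordination can escape the bound---i.e.\ that the argument really is algorithm-independent. This is handled by the fact that the counting step constrains the \emph{final} configuration alone: whatever communication, port choices, or scheduling the robots use, the occupied set in the end must contain a node at distance $\ge k-1$, and whichever robot occupies it is subject to the one-hop-per-step constraint. I would also make the argument uniform across the two system models by phrasing ``step'' as a single edge traversal (equivalently, a synchronous round), and noting that the asynchronous model is only more constrained, so the same $k-1$ lower bound on the farthest robot's move count applies. I would close by contrasting this with the prior $\Omega(D)$ bound of \cite{DBLP:conf/icdcn/AugustineM18}, which was stated for $k=n$: on the long path with $k<n$ the dispersion makespan is in fact $\Theta(k)$ and can be far below $D$, so $\Omega(k)$ is the appropriate lower bound in the regime $k \le n$ and is not subsumed by $\Omega(D)$.
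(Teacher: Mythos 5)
Your proposal is correct and is essentially the paper's own argument: the paper also takes a line graph with all $k$ robots colocated at one endpoint and observes that some robot must travel $k-1$ hops to reach a distinct node. Your write-up merely spells out the counting and one-hop-per-step details that the paper leaves implicit.
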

\begin{proof}
Consider a line graph and all $k$ robots colocated at one end node in the initial configuration. In order for the robots to dock at distinct nodes, some robot must travel $k-1$ hops.
\qed
\end{proof}

For dispersion on general unrooted graphs, the best running time  in \cite{DBLP:conf/icdcn/AugustineM18} was $O(m)$. We consider designing space efficient algorithms, subject to a $O(m)$ running time. Observe that DFS based algorithms can run in $O(m)$ time.

A lower bound of $\Omega(\log \, n)$ bits on the memory of robots was shown in \cite{DBLP:conf/icdcn/AugustineM18}. 
For our algorithms, we analyze the memory bounds of robots assuming that a $O(m)$ time algorithm, based on DFS, is to be used. There are two challenges:
\begin{enumerate}
    \item To determine whether a node has been visited before. Note that nodes have no memory in our system model. Although there are $n$ nodes, we observe that a node has been visited before if and only if there is a robot docked at the node and there is a record of having encountered that robot before. As there are $k (\leq n)$ robots, it suffices to track whether or not each of the $k$ robots has been encountered before. This imposes a bound of $O(k)$ bits.  
    \item If it is determined that a node has been visited before, backtracking is in order to meet the $O(m)$ time bound. During the backtracking phase, to determine which port to use for backtracking requires identifying the parent node from which that robot first entered a particular node. Such a parent node can be identified by the local port number of the edge leading to the parent node. A port at a node can be encoded in $\log \,\Delta$ bits. Further, we need to track ports at at most $k-1$ nodes because only a node with a docked robot requires other visiting robots to backtrack, and up to $k-1$ nodes may be occupied by docked robots.  This imposes a bound of $O(k \log\Delta)$ bits.
\end{enumerate}
Thus, the overall bound on memory at a robot is $O(k\log\Delta)$ bits.

\section{Dispersion Using Helping in the Synchronous Model}
\label{helpsync}
To achieve dispersion, each robot begins a DFS-variant traversal of the graph, seeking to identify a node where no other robot has docked. If multiple robots arrive at a node at which no other robot is docked in a particular round, they use the MUTEX(node) function, explained in Section~\ref{model}, to uniquely determine which of those robots can dock at the node. The other robots continue their search for a free node. During this search, a robot needs to determine if the node it visits has been visited before by it. (This is needed to determine whether to backtrack to avoid getting caught in cycles, or continue its forward exploration of the graph.) A node has been visited before if and only if the robot docked there has encountered the visiting robot after it docked. A robot that docks at a node helps other robots to determine whether they have visited this node before. A robot that docks initializes and maintains a boolean array $visited[1,k]$. It sets $visited[r]$ to true if and only if it has encountered robot $r$ after docking. It helps a visiting robot $r$ by communicating to it the value $visited[r]$. 

In order for a robot to determine whether to backtrack from a (already visited) node or resume forward exploration, it needs to know the port leading to the DFS-parent node of the current node. It is helped in determining this as follows. A robot that docks initializes and maintains an array $entry\_port[1,k]$. Subsequently, when a robot $r$ first visits the node, determined using $visited[r]=0$ of the docked node, the $entry\_port[r]$ entry of the docked robot is set to the entry port used by the visiting robot. The docked robot also communicates $entry\_port[r]$ (in addition to $visited[r]$) to a visiting robot $r$ to help it determine whether to backtrack further or resume forward exploration.

A robot uses the following variables: 
\begin{itemize}
\item $port\_entered$ and $parent\_ptr$ of type port can take values from $\{-1,0,1,\ldots,\log\delta-1\}$ ($\lceil\log (\Delta + 1)\rceil$ bits each); $port\_entered$ indicates the port through which the robot entered the current node on the latest visit whereas $parent\_ptr$ is used to track the port through which the robot entered the current node on the first visit; 
\item $state$ (2 bits) can take values from $\{$explore, backtrack, and settled$\}$; and 
\item $seen$ (1 bit) is a boolean to track whether the current node has been seen/visited before. \item $round$ is used as a round counter ($\log \, m= O(\log \,n)$ bits).
\end{itemize}
In addition, a robot initializes the following two arrays once it docks at a node and enters state $settled$: 
\begin{itemize}
\item $visited[1,k]$ of type boolean ($k$ bits), and 
\item $entry\_port[1,k]$ of type port ($k \lceil\log (\Delta + 1)\rceil$ bits). 
\end{itemize}
The semantics of these two arrays was explained above.

\begin{algorithm}[h!]
\caption{Helping-Sync, synchronous execution, code at robot $i$} \label{alg:helpsync}
\begin{algorithmic}[1]
\State Initialize: $port\_entered\gets -1; state\gets explore; parent\_ptr\gets -1$; $seen\gets 0$
\For{$round=0,4m-2(n-1)$} 
\If{$state=settled$}
 \ForAll{other robot $j$ on the node}
   \State send $visited[j]$ and $entry\_port[j]$ to $j$
   \If{$visited[j]=0$}
    \State $visited[j]\gets 1$; $entry\_port[j]\gets$ receive $port\_entered$ from $j$
   \EndIf
 \EndFor
\Else 
 \If{$round > 0$}
 \State $port\_entered, parent\_ptr\gets$ entry port; $seen\gets 0$
 \EndIf
\EndIf
\If{$state=explore$}
 \If{node has a robot $j$ docked in an earlier round}
  \State $seen, parent\_ptr\gets$ receive $visited[i], entry\_port[i]$ from $j$
  \If{$seen=0$}
   \State $parent\_ptr\gets port\_entered$; send $port\_entered$ to $j$ 
  \EndIf
   \If{$seen=1$}
    \State $state\gets backtrack$; move through $port\_entered$
   \EndIf
 \Else
   \If{$i = (r\gets) winner(MUTEX(node))$} 
    \State $i$ docks at node; $state\gets settled$
    \State Initialize $visited[1,k]\gets \overline{0}$; $entry\_port[1,k]\gets \overline{-1}$ 
    \ForAll{ robot $j$ on the node}
     \State $entry\_port[j]\gets$ receive $port\_entered$ from $j$
     \State $visited[j]\gets 1$
    \EndFor
   \Else
    \State send $port\_entered$ to $r$
   \EndIf
  \EndIf
   \If{$state=explore$}
    \State $port\_entered\gets (port\_entered + 1) \mbox{ mod degree of node}$
    \If{$port\_entered=parent\_ptr$}       
     \State $state\gets backtrack$
    \EndIf
    \State move through $port\_entered$
   \EndIf
\ElsIf{$state=backtrack$}
  \State $seen, parent\_ptr\gets$ receive $visited[i], entry\_port[i]$ from docked robot $j$
  \State $port\_entered\gets (port\_entered + 1) \mbox{ mod degree of node}$
   \If{$port\_entered\neq parent\_ptr$}     
    \State $state\gets explore$  
   \EndIf
  \State move through $port\_entered$
\EndIf
\EndFor
\end{algorithmic}
\end{algorithm}

In Algorithm~\ref{alg:helpsync}, lines (3-7): a docked robot $i$ helps visiting robot $j$ by sending it $visited[j]$ and $entry\_port[j]$, and updating the locally maintained $visited[j]$ and $entry\_port[j]$ if this is the first visit of the robot $j$. 

When robot $i$ visits a node where some robot $j$ is already docked, it receives $visited[i]$ and $entry\_port[i]$ from $j$ (line 13). 
If $i$ has $state=explore$ and the node is already visited, $i$ backtracks through $port\_entered$ (lines 16, 17). Whereas if the node is not already visited (lines 14, 15), $i$ sends $port\_entered$ to $j$ which records it in $entry\_port[i]$ (line 7). Robot $i$ contends for the MUTEX (line 19) if there is no robot docked at the node. If $i$ wins the MUTEX and docks, it initializes the data structures $visited[1,k]$ and $port\_entered[i,k]$ and for other robots $j$ concurrently at this node in this round, it fills in their entries in the newly created data structures (lines 19-24). Whereas if $i$ loses the MUTEX contention, it sends $port\_entered$ to the winner of MUTEX (lines 25, 26).
If $i$ has not backtracked and not docked, $state=explore$. In this case (line 27), $i$ increases $port\_entered$ in a modulo fashion (mod degree of node) and moves forward to the next node, but switches $state$ to $backtrack$ if the port to move forward (new value of $port\_entered$) is the same as the entry port (in line 15, $parent\_ptr$ was set to the old value of $port\_entered$, which was set to $entry\_port$ in line 10) (lines 28-31). 

If $i$ has $state=backtrack$ when it visits a node (line 32), it implies some robot $j$ is already docked, and $i$ receives $visited[i]$ and $entry\_port[i]$ from $j$ (line 33). Robot $i$ increases $port\_entered$ in a modulo fashion (mod degree of node) and moves forwards to the next node while switching $state$ to $explore$, unless the port to move along (new value of $port\_entered$) is the parent pointer port (set to $entry\_port[i]$), in which case $i$ keeps $state$ as $backtrack$ and backtracks instead of moving forward (lines 34-37). 


\begin{theorem}
\label{th:helpsync}
Algorithm~\ref{alg:helpsync} (Helping-Sync) achieves dispersion in a synchronous system in $O(m)$ rounds with $O(k\log\Delta)$ bits at each robot.
\end{theorem}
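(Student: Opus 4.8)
The plan is to establish the three claims—space, correctness, and running time—separately, with correctness carrying most of the weight. For the space bound I would simply add up the sizes of the per-robot variables listed before the algorithm: the scalars $port\_entered$, $parent\_ptr$, $state$, $seen$, and $round$ together use $O(\log\Delta + \log n)$ bits, while the two arrays $visited[1,k]$ and $entry\_port[1,k]$ use $k$ and $k\lceil\log(\Delta+1)\rceil$ bits respectively. The dominant term is $entry\_port[1,k]$, giving $O(k\log\Delta)$ bits, which matches the bound derived in Section~\ref{analysis}.

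For correctness, the central object I would isolate is the traversal of a single robot $i$. I would first argue that $i$ executes a genuine depth-first search whose tree is defined by first-visit entry ports: the first time $i$ enters a node $v$, the docked robot there reports $visited[i]=0$, and $i$ records the incoming port as $parent\_ptr$ (stored as $entry\_port[i]$ at $v$); on every later visit the docked robot reports $visited[i]=1$ and $i$ backtracks through $parent\_ptr$. Because these bookkeeping entries are indexed by the robot label $i$, robot $i$'s search state is read and written only on $i$'s own visits, so the concurrent searches of different robots cannot corrupt one another's parent pointers or visited flags—the only shared quantity is the set of occupied (docked) nodes. I would then verify that the cyclic port-increment rule (line 28) together with the ``next port equals $parent\_ptr$'' test correctly implements the standard DFS discipline of trying ports once each in cyclic order and backtracking to the parent once the parent port reappears, both in the explore branch and in the backtrack branch (lines 34--37). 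This shows each robot's walk is an acyclic DFS that cannot be trapped in a cycle.

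The key to both termination and the distinctness of final positions is that docking is monotone: once a robot enters $state=settled$ it never moves, so a node occupied at some round stays occupied thereafter, and MUTEX guarantees at most one robot docks per node. Using this I would argue that robot $i$ docks before its DFS can complete. A completed DFS visits every node of the (connected) graph, and whenever $i$ visits a node $v$, either $v$ is free at that round—in which case $i$ either docks at $v$ itself or loses the MUTEX to a robot that docks at $v$—or $v$ is already occupied; by monotonicity, in all cases $v$ is occupied from that round onward. If $i$ never docked, then after its DFS completes every node would be permanently occupied, forcing $n$ distinct docked robots none of which is $i$, so $n \le k-1$, contradicting $k \le n$. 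Hence every robot docks, necessarily at a distinct node, achieving dispersion.

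Finally, for the $O(m)$ time bound I would count the edge traversals in a single robot's DFS. A tree edge of $i$'s DFS tree is crossed exactly twice (once on the forward first visit, once on the return backtrack), and a non-tree edge is crossed at most four times—at most twice from each endpoint, each crossing being an immediate ``enter, find $visited[i]=1$, backtrack'' step—since each port at each node is tried at most once during the cyclic sweep. With $n-1$ tree edges and $m-(n-1)$ remaining edges this totals at most $2(n-1) + 4(m-(n-1)) = 4m - 2(n-1)$ traversals, exactly the loop bound, and since an undocked robot moves once per round, each robot docks within $O(m)$ rounds. The main obstacle I anticipate is the correctness argument under concurrency—precisely the failure mode the paper attributes to prior work—so I would take care to make the label-indexed isolation of search state and the monotonicity of docking fully rigorous, since the no-cycles guarantee, the termination argument, and the edge-traversal count all rest on each robot's DFS being clean and independent.
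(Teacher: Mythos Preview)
Your proposal is correct and follows essentially the same approach as the paper: count space by summing the listed variables (dominated by $entry\_port[1,k]$), bound time by the DFS edge-traversal count $2(n-1)+4(m-(n-1))=4m-2(n-1)$, and argue correctness via the pigeonhole observation that an undocked robot's DFS would visit all $n$ nodes while $k\le n$. Your treatment is in fact more careful than the paper's own proof---you make explicit the label-indexed isolation of each robot's DFS state and the monotonicity of docking, which the paper leaves implicit---but the overall strategy is the same.
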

\begin{proof}
Observe that each robot executes a variant of a DFS
in the search for a free node. Each robot may need to traverse each edge of the DFS tree two times (once forward, once backward), and each non-tree edge four times (once for exploration in each direction, and once for backtracking in each direction). So for a total of $4(m-(n-1))+2(n-1)=4m-2n+2$ times. The robot executes for these many rounds, so the running time is $O(m)$. 

From the description and analysis of the variables above, it follows that the memory of each robot is bounded by $O(k\log\Delta)$ bits.

To show that dispersion is achieved in $4m-2n+2$ rounds, observe that the $k$ robots do a collective search of the graph, using individual DFS variants. Within $4m-2n+2$ rounds, if a robot is not yet docked, it will visit each node at least once, and since $k \leq n$, each robot will find a free node and dock there.
\qed
\end{proof}

Note that although a robot may dock at a node, it needs to be active for the rest of the $4m-2n+2$ rounds of the algorithm in order to help other robots which might visit this node.

\section{Dispersion Using Helping in the Asynchronous Model}
\label{helpasync}
Algorithm Helping-Async (Algorithm~\ref{alg:helpasync}) adapts Algorithm Helping-Sync to an asynchronous system but uses the same variables. When a robot arrives at a node, either another robot is docked or not docked at that node; in the latter case, if multiple robots arrive at about the same time, then function MUTEX(node) selects one of them to dock. Another implication of an asynchronous system is that a docked robot needs to loop forever, waiting to help any other robot that might arrive at the node later. 

\begin{theorem}
\label{th:helpasync}
Algorithm~\ref{alg:helpasync} (Helping-Async) achieves dispersion (without termination) in an asynchronous system in $O(m)$ steps with $O(k\log\Delta)$ bits at each robot.
\end{theorem}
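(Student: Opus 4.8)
The plan is to reduce the asynchronous correctness claim to the already-established synchronous result (Theorem~\ref{th:helpsync}) by arguing that the per-robot computation in Algorithm Helping-Async is structurally identical to Helping-Sync, and that asynchrony affects only the \emph{interleaving} of robot actions, not the logical decisions each robot makes. First I would observe that every decision a robot takes---whether a node has been \emph{seen}, which port to use for backtracking, whether to dock---depends only on information that is \emph{local and monotone}: the docked robot's $visited[\,\cdot\,]$ and $entry\_port[\,\cdot\,]$ entries, which are written exactly once (on the first visit of each robot) and never overwritten thereafter. Because these entries are set-once, the value a visiting robot reads is independent of \emph{when} it reads it relative to other robots' movements, so each robot's individual traversal is still a valid DFS variant exactly as in the synchronous analysis.

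Next I would handle the two genuinely asynchronous concerns. The first is the docking contention at a free node: I would invoke the MUTEX(node) primitive from Section~\ref{model}, noting that it is specified to guarantee exactly one winner even when arrivals are not simultaneous (e.g.\ the earliest-arriving robot wins, ties broken by lowest port), so at most one robot ever docks per node and the final configuration has at most one robot per node. The second is the helping protocol: since a docked robot loops forever (as stated in the paragraph preceding the theorem), any visiting robot that arrives arbitrarily late is still guaranteed to receive a response, and because the docked robot's arrays are set-once, the response is well-defined and consistent regardless of arrival order. Together these two points show that each robot still performs a correct DFS-variant search over the same graph, and since $k \le n$, each undocked robot will eventually reach a free node and dock there, giving dispersion.

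For the complexity bounds I would argue as follows. The memory bound $O(k\log\Delta)$ is immediate: Helping-Async uses the same variables and the same two set-once arrays $visited[1,k]$ and $entry\_port[1,k]$ as Helping-Sync, so the memory analysis of Theorem~\ref{th:helpsync} carries over verbatim. For the time bound, I would count the active computation-and-movement steps of a single robot exactly as in the synchronous proof: each DFS-tree edge is traversed at most twice and each non-tree edge at most four times, for a total of at most $4m-2(n-1)$ active steps before the robot docks. The only subtlety is that ``time'' is now measured per robot in its own active steps rather than in global rounds; I would make explicit (consistent with the discussion in Section~\ref{results}) that once a robot docks it does no further active computation or movement---it merely passively relays its set-once array entries---so the $O(m)$ figure correctly captures the active work, even though the algorithm as a whole does not terminate.

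The main obstacle I anticipate is making the reduction airtight rather than hand-wavy: I must argue that the set-once semantics of the helper arrays genuinely decouples each robot's logical DFS from the global interleaving. Concretely, the delicate case is when robot $r$'s first arrival at a node and its determination of $parent\_ptr$ could, a priori, race against another robot's actions at the same node; I would need to verify that the write of $entry\_port[r]$ and the read of $visited[r]$ by $r$ form a consistent first-visit handshake under the MUTEX/helping discipline, so that $r$'s $seen$ flag and backtracking port are computed from the same first-visit data as they would be synchronously. Establishing this invariant---that each robot sees its own first-visit entry port recorded before any subsequent revisit reports $seen=1$---is the crux, and once it is in place the synchronous correctness argument transfers directly.
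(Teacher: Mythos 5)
Your proposal takes essentially the same route as the paper, which simply reduces the asynchronous claim to Theorem~\ref{th:helpsync} and notes that docked robots must loop forever, precluding termination. Your version is considerably more detailed than the paper's (the set-once-array argument, the MUTEX handling, and the first-visit handshake invariant are all elaborations the paper leaves implicit), but the underlying strategy is identical.
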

\begin{proof}
The proof is similar to that of Theorem~\ref{th:helpsync}. The difference is that due to the nature of the asynchronous system, a docked robot needs to loop forever, waiting to help any other robot that might arrive at the node later. Thus, termination is not possible.
\qed
\end{proof}

As noted in Section~\ref{results}, although the asynchronous algorithm, technically speaking, does not terminate because the docked robots need to be awake to relay local information to visiting robots, we state its time complexity (to be $O(m)$)  because at most $O(m)$ steps are required for each robot to perform active computations and movements until it docks at a node; after that, a docked robot merely passively helps visiting robots (until they find a node to dock).

\begin{algorithm}[h!]
\caption{Helping-Async, asynchronous execution, code at robot $i$} \label{alg:helpasync}
\begin{algorithmic}[1]
\State Initialize: $port\_entered\gets -1; state\gets explore; parent\_ptr\gets -1$; $seen\gets 0$
\For{$round=0,4m-2(n-1)$} 
\If{$round > 0$}
 \State $port\_entered, parent\_ptr\gets$ entry port; $seen\gets 0$
\EndIf
\If{$state=explore$}
 \If{node has a robot $j$ docked}
  \State $seen, parent\_ptr\gets$ receive $visited[i], entry\_port[i]$ from $j$
  \If{$seen=0$}
   \State $parent\_ptr\gets port\_entered$;  send $port\_entered$ to $j$
  \EndIf 
   \If{$seen=1$}
    \State $state\gets backtrack$; move through $port\_entered$
   \EndIf
 \Else
   \If{$i = (r\gets) winner(MUTEX(node))$} 
    \State $i$ docks at node; $state\gets settled$
    \State Initialize $visited[1,k]\gets \overline{0}$; $entry\_port[1,k]\gets \overline{-1}$; break()
   \Else
    \State $seen, parent\_ptr\gets$ receive $visited[i], entry\_port[i]$ from $r$
    \If{$seen=0$}
     \State $parent\_ptr\gets port\_entered$; send $port\_entered$ to $r$ 
    \EndIf
   \EndIf
  \EndIf
    \State $port\_entered\gets (port\_entered + 1) \mbox{ mod degree of node}$
    \If{$port\_entered=parent\_ptr$}        
     \State $state\gets backtrack$
    \EndIf
    \State move through $port\_entered$
\ElsIf{$state=backtrack$}
  \State $seen, parent\_ptr\gets$ receive $visited[i], entry\_port[i]$ from docked robot $j$
  \State $port\_entered\gets (port\_entered + 1) \mbox{ mod degree of node}$
   \If{$port\_entered\neq parent\_ptr$}        
    \State $state\gets explore$  
   \EndIf
  \State move through $port\_entered$
\EndIf
\EndFor
\Repeat \Comment{$state=settled$}
 \ForAll{other robot $j$ that is/arrives at the node}
   \State send $visited[j]$ and $entry\_port[j]$ to $j$
   \If{$visited[j]=0$}
    \State $visited[j]\gets 1$; $entry\_port[j]\gets$ receive $port\_entered$ from $j$
   \EndIf
 \EndFor
\Until{true}
\end{algorithmic}
\end{algorithm}

\begin{algorithm}[h!]
\caption{Independent-Async, asynchronous execution, code at robot $i$} \label{alg:indepasync}
\begin{algorithmic}[1]
\State Initialize: $port\_entered\gets -1; state\gets explore$; $visited[1,k]\gets \overline{0}$; $stack\gets \perp$
\For{$round=0,4m-2(n-1)$} 
\If{$round > 0$}
 \State $port\_entered\gets$ entry port
\EndIf
\If{$state=explore$}
 \If{robot $j$ is docked at node AND $visited[j]=1$}
  \State $state\gets backtrack$; move through $port\_entered$
 \ElsIf{robot $j$ is docked at node AND $visited[j]=0$}
  \State $visited[j]\gets 1$
  \State $push(stack,port\_entered)$
  \State $port\_entered\gets (port\_entered + 1)$ mod degree of node
  \If{$port\_entered=top(stack)$}
   \State $state\gets backtrack; pop(stack)$
  \EndIf
  \State move through $port\_entered$
 \ElsIf{node is free}
  \If{$i = (r\gets) winner(MUTEX(node))$}
   \State $i$ docks at node; $state\gets settled$; break()
  \Else 
   \State $visited[r]\gets 1$
   \State $push(stack,port\_entered)$
   \State $port\_entered\gets (port\_entered + 1)$ mod degree of node
   \If{$port\_entered=top(stack)$}         
    \State $state\gets backtrack; pop(stack)$
   \EndIf
   \State move through $port\_entered$
  \EndIf
 \EndIf  
\ElsIf{$state=backtrack$} 
 \State $port\_entered\gets (port\_entered + 1)$ mod degree of node
 \If{$port\_entered\neq top(stack)$}            
  \State $state\gets explore$ 
 \Else
  \State $pop(stack)$
 \EndIf
 \State move through $port\_entered$
 \EndIf
\EndFor
\end{algorithmic}
\end{algorithm}

\section{Independent Dispersion in the Asynchronous Model}
\label{indepasync}
In Algorithm~\ref{alg:indepasync} (Independent-Async) for the asynchronous model, the traversal of the graph by each robot is the same as in the previous two algorithms. However, there is no helping of undocked robots by docked robots. In addition to $port\_entered$ and $state$, an undocked robot maintains the data following additional data structures:
\begin{itemize}
\item  array of boolean $visited[1,k]$ to determine by checking $visited[r]$ whether it has visited the node where robot $r$ is docked, and 
\item  $stack$ of type port number, to determine the parent pointer of the nodes it has visited. Specifically, the port numbers in the stack (from top to bottom) help the robot to backtrack from the current node all the way to its origin node in the initial configuration. When a robot explores the graph in a step, the entry port number into the current node get pushed onto the stack, and as a robot backtracks in a step, the port number gets popped from the stack. In addition, the top of the stack entry is used for determining whether a robot should switch from backtracking state to explore state, or switch from explore state to backtracking state.
\end{itemize}

Thus, undocked robots are largely independent of docked robots. However, even in this algorithm, a docked robot cannot terminate; it needs to stay up so that it can relay its label $r$ to a visiting undocked robot, which can then look up $visited[r]$, and if necessary, manipulate its $stack$, in order to take further actions for exploring the graph. This action of docked robots (once they enter $settled$ state) is not explicitly shown in the Algorithm~\ref{alg:indepasync} pseudo-code. 

In addition to the $port\_entered$ ($\lceil\log(\Delta + 1)\rceil$ bits) and $state$ (two bits) variables used by the previous algorithms, the boolean $visited[1,k]$ array takes $O(k)$ bits and the $stack$ takes $O(k\log\Delta)$ bits, because the maximum depth of the stack is $k-1$, the maximum number of nodes at which there is a docked robot encountered.

In Algorithm~\ref{alg:indepasync}, when robot $i$ visits a node and $state=explore$ (line 5): 
\begin{enumerate}
\item (lines 6, 7): if a robot is docked and the node has been visited before, robot $i$ backtracks.
\item (lines 8-14): if robot $j$ is docked at the node but the node has not been visited before, robot $i$ marks $visited[j]$ as true and increments $port\_entered$ in a modulo fashion (mod degree of node). If the new value of  $port\_entered$ equals its old value, $i$ changes $state$ to $backtrack$ and moves through $port\_entered$; else the old value of $port\_entered$ is pushed onto the $stack$ and $i$ moves through $port\_entered$ to continue the forward exploration of the graph.
\item (lines 15-24): if the node is free, $i$ contends for the MUTEX to dock. If $i$ wins, it docks. If $i$ loses but robot $r$ won, robot $i$ marks $visited[r]$ as true and increments $port\_entered$ in a modulo fashion (mod degree of node). If the new value of  $port\_entered$ equals its old value, $i$ changes $state$ to $backtrack$ and moves through $port\_entered$; else the old value of $port\_entered$ is pushed onto the $stack$ and $i$ moves through $port\_entered$ to continue the forward exploration of the graph.
\end{enumerate}
When robot $i$ visits a node and $state=backtrack$ (line 25),  robot $i$ increments $port\_entered$ in a modulo fashion (mod degree of node) and moves forward to the next node while switching $state$ to $explore$, unless the port it is going to move along is the parent pointer port (the top of the $stack$), in which case $i$ keeps $state$ as $backtrack$ and pops the top of the $stack$ before moving along (lines 26-31). 

\begin{theorem}
\label{th:indepasync}
Algorithm~\ref{alg:indepasync} (Independent-Async) achieves dispersion (without termination) in an asynchronous system in $O(m)$ steps with $O(k\log\Delta)$ bits at each robot.
\end{theorem}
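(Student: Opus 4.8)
The plan is to prove Theorem~\ref{th:indepasync} by mirroring the structure of the proof of Theorem~\ref{th:helpsync}, establishing the three claims separately: correctness (dispersion is achieved), the $O(m)$ step bound, and the $O(k\log\Delta)$ space bound. The space bound is the easiest and I would dispatch it first by simply invoking the variable analysis already given in the text preceding the theorem: $port\_entered$ costs $\lceil\log(\Delta+1)\rceil$ bits, $state$ costs two bits, $visited[1,k]$ costs $O(k)$ bits, and the $stack$ has maximum depth $k-1$ (one entry per docked robot encountered along the current search path) with each entry a port number of $\log\Delta$ bits, for a total of $O(k\log\Delta)$ bits.

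For correctness and the time bound, the core observation is that each robot performs an independent DFS-variant traversal of the graph, identical in its movement pattern to the traversals in Algorithms~\ref{alg:helpsync} and~\ref{alg:helpasync}. The essential difference to verify is that the bookkeeping which was previously delegated to docked robots (the $visited$ and $entry\_port$ arrays) is now maintained locally by each searching robot via its own $visited[1,k]$ array and its own $stack$. First I would argue that these local data structures correctly reproduce the DFS semantics: $visited[j]=1$ exactly when robot $i$ has previously reached the node where robot $j$ is docked, so that the same-node revisit test (lines 6--7) correctly triggers backtracking; and the $stack$ correctly records, from top to bottom, the sequence of entry ports along the current root-to-frontier path, so that the modulo-increment comparison against $top(stack)$ (lines 12--14, 22--24, 28--31) correctly distinguishes "resume exploration" from "exhausted node, backtrack through parent port." I would then note that a robot detects a previously visited node only when some robot is docked there, which is sound because a node that has been visited by robot $i$ before either had a docked robot at the time (recorded via $visited$) or will acquire one (via MUTEX) before it matters; since nodes themselves have no memory, docked robots are the only persistent markers, and a node with no docked robot is genuinely free for $i$ to dock or explore onward.

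Given that the movement pattern coincides with the validated DFS variant of Theorem~\ref{th:helpsync}, the counting argument transfers directly: each tree edge is traversed at most twice and each non-tree edge at most four times, giving at most $4(m-(n-1))+2(n-1)=4m-2n+2$ steps per robot, hence the $O(m)$ bound, and within this many steps any robot not yet docked will have visited every node at least once. Since $k\le n$, a free node must exist, so every robot docks; contention is resolved by MUTEX so no two robots dock at the same node, yielding the final dispersed configuration. As in Theorem~\ref{th:helpasync}, termination is not claimed because docked robots must remain awake to relay their labels to later-arriving robots.

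The main obstacle, and the place I would spend the most care, is justifying the $stack$ semantics in the fully asynchronous, concurrent setting where multiple robots search simultaneously and share docked robots as waypoints. In particular I must verify that one robot's DFS is not corrupted by another's concurrent movements or dockings: because each robot's $stack$ and $visited$ array are private and a docked robot only ever reveals its own label $r$ (never any searcher's traversal state), the traversals are genuinely independent, and the only coupling is that a newly docked robot can convert a previously-free node into a visited-node marker. I would argue this coupling is benign for the $O(m)$ bound because it can only cause an additional backtrack, never an infinite loop, since each robot's own $visited$ and $stack$ still impose the depth-restricted, cycle-free DFS discipline that guarantees progress toward an unoccupied node.
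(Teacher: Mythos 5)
Your proposal is correct and follows essentially the same route as the paper's own proof, which simply defers the time bound to the argument of Theorem~\ref{th:helpsync}, cites the variable analysis for the space bound, and notes that docked robots must loop forever to relay their labels. Your treatment is in fact more detailed than the paper's, particularly in flagging and addressing the concurrency coupling between independent traversals, which the paper leaves implicit.
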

\begin{proof}
The proof that the running time is $O(m)$, or more specifically $4m-2n+2$ steps, is similar to that of Theorem~\ref{th:helpsync}. 
From the description and analysis of the variables above, it follows that the memory of each robot is bounded by $O(k\log\Delta)$ bits.

Note that due to the nature of the asynchronous system, a docked robot (i.e., once it enters $state$ = $settled$) needs to loop forever, waiting to relay its label to any other robot that might arrive at the node later. (This action is not explicitly shown in Algorithm~\ref{alg:indepasync}.) Thus, termination is not possible. 
\qed
\end{proof}
As noted in Section~\ref{results}, although the asynchronous algorithm, technically speaking, does not terminate because the docked robots need to be awake to relay local information to visiting robots, we state its time complexity (to be $O(m)$)  because at most $O(m)$ steps are required for each robot to perform active computations and movements until it docks at a node; after that, a docked robot merely passively helps visiting robots (until they find a node to dock).

It is a straightforward exercise to transform the algorithm into its synchronous version, {\em Independent-Sync}. In the synchronous algorithm, a robot can terminate after $4m-2(n-1)$ rounds, as it is guaranteed that every other robot would have found a free node by then.

\section{Conclusions}
\label{concl}
For the dispersion problem of mobile robots on general graphs, 
we proposed three algorithms for the synchronous and the asynchronous system models. The algorithms require a memory of $O(k\log\Delta)$ bits at each robot, and a running time of $O(m)$ steps. 
It is a challenge to design more space and time efficient algorithms for dispersion.

\printbibliography

\end{document}